\newcommand{\cmark}{\ding{51}}
\newcommand{\xmark}{\ding{55}}
\begin{document}

\title{Dynamic Searchable Symmetric Encryption Schemes Supporting Range Queries with Forward/Backward Privacy}

\author{Cong Zuo\inst{1,2} \and
	Shi-Feng Sun\inst{1,2,} \and
	Joseph K. Liu\inst{1} \and
	Jun Shao \inst{3}\and 
	Josef Pieprzyk\inst{2,4}}

\institute{Faculty of Information Technology, Monash University, Clayton, 3168, Australia \\
	\email{\{cong.zuo1,shifeng.sun,joseph.liu\}@monash.edu}
	 \and Data61, CSIRO, Melbourne/Sydney, Australia \\
	 \email{ josef.pieprzyk@data61.csiro.au}
	\and School of Computer and Information Engineering, Zhejiang Gongshang University, Hangzhou 310018, Zhejiang, China\\
	\email{chn.junshao@gmail.com}
	\and Institute of Computer Science, Polish Academy of Sciences, 01-248 Warsaw,  Poland
}

\maketitle

\begin{abstract}
Dynamic searchable symmetric encryption (DSSE) is a useful cryptographic tool in encrypted cloud storage. However, it has been reported that DSSE usually suffers from file-injection attacks and content leak of deleted documents. To mitigate these attacks, forward privacy and backward privacy have been proposed. Nevertheless, the existing forward/backward-private DSSE schemes can only support single keyword queries. To address this problem, in this paper, we propose two DSSE schemes supporting range queries. One is forward-private and supports a large number of documents. The other can achieve backward privacy, while it can only support a limited number of documents. Finally, we also give the security proofs of the proposed DSSE schemes in the random oracle model.
\keywords{Dynamic searchable symmetric encryption \and Forward privacy \and Backward privacy \and Range queries}
\end{abstract}

\section{Introduction}
Searchable symmetric encryption (SSE) is a useful cryptographic primitive that can encrypt the data to protect its confidentiality while keeping its searchability. Dynamic SSE (DSSE) further provides data dynamics that allows the client to update data over the time without losing data confidentiality and searchability. Due to this property, DSSE is highly demanded in encrypted cloud. However, many existing DSSE schemes \cite{KPR12,CJJJKRS14} suffer from file-injection attacks \cite{CGPR15,ZKP16}, where the adversary can compromise the privacy of a client query by injecting a small portion of new documents to the encrypted database. To resist this attack, Zhang et al. \cite{ZKP16} highlighted the need of forward privacy that was informally introduced by Stefanov et al. \cite{SPS14}. The formal definition of forward privacy for DSSE was given by Bost \cite{Bos16} who also proposed a concrete forward-private DSSE scheme. Furthermore, Bost et al. \cite{BMO17} demonstrated the damage of content leak of deleted documents and proposed the corresponding security notion—backward privacy. Several backward-private DSSE schemes were also presented in \cite{BMO17}. 

Nevertheless, the existing forward/backward-private DSSE schemes only support single keyword queries, which are not expressive enough in data search service \cite{FJKNRS15,DPPDG16}. To solve this problem, we aim to design forward/backward-private DSSE schemes supporting range queries. Our design starts from the regular binary tree in \cite{FJKNRS15} to support range queries. However, the binary tree in \cite{FJKNRS15} cannot be applied directly to the dynamic setting. It is mainly because that the keywords in \cite{FJKNRS15} are labelled according to the corresponding tree levels that will change significantly in the dynamic setting. A na\"{i}ve solution is to replace all old keywords by the associated new keywords. This is, however, not efficient. To address this problem, we have to explore new approaches for our goal.

\noindent \textbf{Our Contributions.} To achieve the above goal, we propose two new DSSE constructions supporting range queries in this paper. The first one is forward-private but with a larger client overhead in contrast to \cite{FJKNRS15}. The second one is backward-private DSSE which greatly reduces the client and the server storage at the cost of losing forward privacy. In more details, our main contributions are as follows:

\begin{itemize}
\item To make the binary tree suitable for range queries in the dynamic setting, we introduce a new binary tree data structure, and then present the first forward-private DSSE supporting range queries by applying it to Bost's scheme \cite{Bos16}. However, the forward privacy is achieved at the expense of suffering from a large storage overhead on the client side.
\item To achieve backward privacy, we apply the Paillier cryptosystem and the bit string representation to Bost's framework. To reduce the storage at both the client and server side, we used a fixed update token for each keyword. Note that, this scheme is not forward-private, since, for every update, the server knows which keyword has been updated. We refer readers to Section \ref{sec:construction} for more details. Notably, due to the limitation of the Paillier cryptosystem, it cannot support large-scale database consisting a large number of documents. Nevertheless, it suits well for certain scenarios where the number of documents is moderate. The new approach may give new lights on designing more efficient and secure DSSE schemes.
\item Also, the comparison with related works in Table \ref{tab:comparison} and detailed security analyses are provided, which demonstrate that our constructions are not only forward/backward-private but also with a comparable efficiency.
\end{itemize}

%\noindent\textit{Remark:} In the conference version \cite{ZSLSP18}, we claim that our second scheme can achieve forward privacy. However, it cannot achieve forward privacy, since we remove the dictionary at the client side to save storage. In the encrypted database, every keyword shares the same update token (address). Hence, for each update, the server knows which keyword has been updated. Note that, if we want to achieve forward privacy, we can use the same framework \cite{Bos16} as our first construction at the cost of large client storage. Nevertheless, we failed to change this in the conference version.
\noindent\textit{Remark:} In this full version, we correct the wrong theorem  (cf. Theorem 2) in the conference version \cite{ZSLSP18}, where we claimed that the second construction can achieve forward privacy. In fact, the forward privacy cannot be achieved, because the update can be linked to previous searches by the fixed token. In particular, we focus on achieving backward privacy by applying the Paillier cryptosystem and the bit string representation to Bost \cite{Bos16}'s framework, which can also achieve forward privacy due to Bost's technique.  For every update on keyword $n$, however, the server needs to store a ciphertext  (of Paillier encryption)  corresponding to $n$.  In addition, the number of keywords is nearly doubled compared to the scheme of \cite{Bos16}, as mentioned in our first construction.  Then this construction incurs large storage on both the server and the client side. In the conference version, we used a fixed update token for each keyword to further reduce the storage overhead. By this way, the server can homomorphicly add the ciphertext to the previous ones corresponding to the same keyword, and the client does not need to store the current search token yet. Therefore, both the client and the server storage are reduced a lot, but at the cost of losing forward privacy. Nevertheless, we made some mistake when preparing the conference version and so we correct it here.

\begin{table}[!htb]
	\centering
	\caption{Comparison with existing DSSE schemes}\label{tab:comparison}
	\begin{tabular}{|c|c|c|c|c|c|c|c|}
		\hline
		\multirow{2}{*}{Scheme} & \multicolumn{2}{c|}{Client Computation} & Client & Range & Forward & Backward & Document \\
		\cline{2-3}
		& Search & Update & Storage & Queries & Privacy & Privacy & number\\
		\hline
		\cite{FJKNRS15} & $w_R$ & - & $O(1)$ & \cmark & \xmark & \xmark & large\\
		\hline
		\cite{Bos16} & - & $O(1)$ & $O(W)$ &  \xmark & \cmark & \xmark & large \\
		\hline
		Ours A & $w_R$ & $\lceil log(W)\rceil + 1$ & $O(2W)$ & \cmark & \cmark & \xmark & large\\
		\hline
		Ours B & $w_R$ & $\lceil log(W)\rceil + 1$ & $O(1)$ & \cmark & \xmark & \cmark & small\\
		\hline
		\multicolumn{8}{p{11cm}<{}}{$W$ is the number of keywords in a database, $w_R$ is the number of keywords for a range query(we map a range query to a few different keywords).}
	\end{tabular}
\end{table}

\subsection{Related Works}
Song et al. \cite{SWP00} were the first using symmetric encryption to facilitate keyword search over the encrypted data. Later, Curtmola et al. \cite{CGKO06} gave a formal definition for SSE and the corresponding security model in the static setting. To make SSE more scalable and expressive, Cash et al. \cite{CJJKRS13} proposed a new scalable SSE supporting Boolean queries. Following this construction, many extensions have been proposed. Faber et al. \cite{FJKNRS15} extended it to process a much richer collection of queries. For instance, they used a binary tree with keywords labelled according to the tree levels to support range queries. Zuo et al. \cite{ZMYSL16} made another extension to support general Boolean queries. Cash et al.'s construction has also been extended into multi-user setting \cite{SLSSY16,KLS17,WWSLSC17}. However, the above schemes cannot support data update. To solve this problem, some DSSE schemes have been proposed \cite{KPR12,CJJJKRS14}.

However, designing a secure DSSE scheme is not an easy job. Cash et al. \cite{CGPR15} pointed out that only a small leakage leveraged by the adversary would be enough to compromise the privacy of clients' queries. A concrete attack named file-injection attack was proposed by Zhang et al. \cite{ZKP16}. In this attack, the adversary can infer the concept of aclient queries by injecting a small portion of new documents into encrypted database. This attack also highlights the need for forward privacy which protects security of new added parts. Accordingly, we have backward privacy that protects security of new added parts and later deleted. These two security notions were first introduced by Stefanov et al.\cite{SPS14}. The formal definitions of forward/backward privacy for DSSE were given by Bost \cite{Bos16} and Bost et al. \cite{BMO17}, respectively. In \cite{Bos16}, Bost also proposed a concrete forward-private DSSE scheme, it does not support physical deletion. Later on, Kim et al. \cite{KKLPK17} proposed a forward-private DSSE scheme supporting physical deletion. Meanwhile, Bost et al. \cite{BMO17} proposed a forward/backward-private DSSE to reduce leakage during deletion. Unfortunately, all the existing forward/backward-private DSSE schemes only support single keyword queries. Hence, forward/backward-private DSSE supporting more expressive queries, such as range queries, are quite desired.

Apart from the binary tree technique, order preserving encryption (OPE) can also be used to support range queries. The concept of OPE was proposed by Agrawal et al. \cite{AKSX04}, and it allows the order of the plaintexts to be preserved in the ciphertexts. It is easy to see that this kind of encryption would lead to the leakage in \cite{BCLO09,BCO11}. To reduce this leakage, Boneh et al. \cite{BLRSZZ15} proposed another concept named order revealing encryption (ORE), where the order of the ciphertexts are revealed by using an algorithm rather than comparing the ciphertexts (in OPE) directly. More efficient ORE schemes were proposed later \cite{CLWW16}. However, ORE-based SSE still leaks much information about the underlying plaintexts. To avoid this, in this paper, we focus on how to use the binary tree structure to achieve range queries. 

\subsection{Organization}
The remaining sections of this paper are organized as follows. In Sect. \ref{sec:pre}, we give the background information and building blocks that are used in this paper. In Sect. \ref{sec:sse}, we give the definition of DSSE and its security definition. After that in Sect. \ref{sec:construction}, we present a new binary tree and our DSSE schemes. Their security analyses are given in Sect. \ref{sec:security}. Finally, Sect. \ref{sec:conclusion} concludes this work.

\section{Preliminaries}\label{sec:pre}
In this section, we describe cryptographic primitives (building blocks) that are used in this work.
\subsection{Trapdoor Permutations}
A trapdoor permutation (TDP) $\Pi$ is a one-way permutation over a domain $D$ such that (1) it is ``easy'' to compute $\Pi$ for any value of the domain with the public key, and (2) it is ``easy'' to calculate the inverse $\Pi^{-1}$ for any value of a co-domain $\mathcal{M}$ only if a matching secret key is known. More formally, $\Pi$ consists of the following algorithms:
\begin{itemize}
	\item $\texttt{TKeyGen}(1^{\lambda})\rightarrow (\texttt{TPK,\texttt{TSK}})$: For a security parameter $1^{\lambda}$, the algorithm returns a pair of cryptographic keys: a public key $\texttt{TPK}$ and a secret key $\texttt{TSK}$.
	
	\item $\Pi(\texttt{TPK},x)\rightarrow y$: For a pair: public key $\texttt{TPK}$ and $x\in D$, the algorithm outputs $y\in \mathcal{M}$.
	
	\item $\Pi^{-1}(\texttt{TSK},y)\rightarrow x$: For a pair: a secret key $\texttt{TSK}$ and $y\in\mathcal{M}$, the algorithm returns $x\in D$.
\end{itemize}

\noindent\textbf{One-wayness.} We say $\Pi$ is one-way if for any probabilistic polynomial time (PPT) adversary $\mathcal{A}$, an advantage \[\verb"Adv"_{\Pi,\mathcal{A}}^{\verb"OW"}(1^{\lambda})=\Pr[ x\leftarrow \mathcal{A}(\texttt{TPK}, y)]\] is negligible, where $(\texttt{TSK},\texttt{TPK})\leftarrow \texttt{TKeyGen}(1^{\lambda})$, $y\leftarrow\Pi(\texttt{TPK}, x)$, $x\in D$.

\subsection{Paillier Cryptosystem}
A Paillier cryptosystem $\Sigma=(\texttt{KeyGen}, \texttt{Enc}, \texttt{Dec})$ is defined by following three algorithms:
\begin{itemize}
	\item $\texttt{KeyGen}(1^{\lambda})\rightarrow (\texttt{PK,SK})$: It chooses at random two primes $p$ and $q$ of similar lengths and computes $n=pq$ and $\phi(n)=(p-1)(q-1)$. Next it sets $g=n+1$, $\beta=\phi(n)$ and $\mu=\phi(n)^{-1}$ mod $n$. It returns $\texttt{PK}=(n,g)$ and $\texttt{SK}=(\beta, \mu)$.
	
	\item $\texttt{Enc}(\texttt{PK},m)\rightarrow c$: Let $m$ be the message, where $0\le m<n$, the algorithm selects an integer $r$ at random from $\mathbb{Z}_n$ and computes a ciphertext $c=g^m\cdot r^n$ mod $n^2$.
	
	\item $\texttt{Dec}(\texttt{SK},c)\rightarrow m$: The algorithm calculates $m=L(c^{\beta}$ mod $n^2)\cdot \mu$ mod $n$, where $L(x)=\frac{x-1}{n}$.
\end{itemize}

\noindent\textbf{Semantically Security.} We say $\Sigma$ is semantically secure if for any probabilistic polynomial time (PPT) adversary $\mathcal{A}$, an advantage \[\verb"Adv"_{\Sigma,\mathcal{A}}^{\verb"IND-CPA"}(1^{\lambda})=|\Pr[\mathcal{A}(\texttt{Enc}(\texttt{PK}, m_0))=1]-\Pr[\mathcal{A}(\texttt{Enc}(\texttt{PK}, m_1))=1]|\] is negligible, where $(\texttt{SK}$, $\texttt{PK})\leftarrow \texttt{KeyGen}(1^{\lambda})$, $\mathcal{A}$ chooses $m_0$, $m_1$ and $|m_0|=|m_1|$.

\noindent\textbf{Homomorphic Addition.} Paillier cryptosystem is homomorphic, i.e. \[\texttt{Dec}(\texttt{Enc}(m_1)\cdot \texttt{Enc}(m_2))\bmod{n^2}=m_1+m_2\bmod{n}.\] In our second construction, we need this property to achieve backward privacy.

\subsection{Notations}
The list of notations used is given in Table \ref{tab:notation}.

\begin{table}[!htb]
	\centering
	\caption{Notations (used in our constructions)}\label{tab:notation}
	\begin{tabular}{|c|p{11cm}<{}|}
		\hline
		$W$ & The number of keywords in a database \texttt{DB} \\
		\hline
		\texttt{BDB} & The binary database which is constructed from a database \texttt{DB} by using our binary tree \texttt{BT} \\
		\hline
		$m$ & The number of values in the range $[0, m-1]$ for our range queries \\
		\hline
		$v$ & A value in the range $[0, m-1]$ where $0\le v<m$ \\
		\hline
		$n_i$ & The $i$-th node in our binary tree which is considered as the keyword \\
		\hline
		$\texttt{root}_{o}$ & The root node of the binary tree before update \\
		\hline
		$\texttt{root}_n$ & The root node of the binary tree after update \\
		\hline
		$ST_c$ & The current search token for a node $n$ \\
		\hline
		$\mathcal{M}$ & A random value for $ST_0$ which is the first search token for a node $n$ \\
		\hline
		$UT_c$ & The current update token for a node $n$ \\
		\hline
		\textbf{T} & A map which is used to store the encrypted database \texttt{EDB} \\
		\hline
		\textbf{N} & A map which is used to store the current search token for $n_i$ \\
		\hline
		\textbf{NSet} & The node set which contains the nodes \\
		\hline
		$\texttt{TPK}$ & The public key of trapdoor permutation \\
		\hline
		$\texttt{TSK}$ & The secret key of trapdoor permutation \\
		\hline
		$\texttt{PK}$ & The public key of Paillier cryptosystem \\
		\hline
		$\texttt{SK}$ & The secret key of Paillier cryptosystem \\
		\hline
		$f_i$ & The $i$-th file \\
		\hline
		\texttt{PBT} & Perfect binary tree \\
		\hline
		\texttt{CBT} & Complete binary tree \\
		\hline
		\texttt{VBT} & Virtual perfect binary tree \\
		\hline
		\texttt{ABT} & Assigned complete binary tree \\
		\hline
	\end{tabular}
\end{table}

\section{Dynamic Searchable Symmetric Encryption (DSSE)}\label{sec:sse}
We follow the database model given in the paper \cite{Bos16}. A database is a collection of (index, keyword set) pairs denoted as \texttt{DB}$=(ind_i,\textbf{W}_i)_{i=1}^d$, where $ind_i\in\{0,1\}^{\ell}$ and $\textbf{W}_i\subseteq\{0,1\}^*$. The set of all keywords of the database \texttt{DB} is $\textbf{W}=\cup_{i=1}^d\textbf{W}_i$, where $d$ is the number of documents in \texttt{DB}. We identify $W=|\textbf{W}|$ as the total number of keywords and $N=\Sigma_{i=1}^d|\textbf{W}_i|$ as the number of document/keyword pairs. We denote \texttt{DB}($w$) as the set of documents that contain a keyword $w$. To achieve a sublinear search time, we encrypt the file indices of \texttt{DB}($w$) corresponding to the same keyword $w$ (a.k.a. inverted index\footnote{It is an index data structure where a word is mapped to a set of documents which contain this word.}).

A DSSE scheme $\Gamma$ consists of an algorithm \textbf{Setup} and two protocols \textbf{Search} and \textbf{Update} as described below.

\begin{itemize}
\item (\texttt{EDB}, $\sigma$) $\leftarrow$ \textbf{Setup}(\texttt{DB}, $1^{\lambda}$): For a security parameter $1^{\lambda}$ and a database \texttt{DB}. The algorithm outputs an encrypted database \texttt{EDB} for the server and a secret state $\sigma$ for the client.
\item ($\mathcal{I}$, $\perp$) $\leftarrow$ \textbf{Search}($q$, $\sigma$, \texttt{EDB}): The protocol is executed between a client (with her query $q$ and state $\sigma$) and a server (with its \texttt{EDB}). At the end of the protocol, the client outputs a set of file indices $\mathcal{I}$ and the server outputs nothing.
\item ($\sigma'$, \texttt{EDB}$'$) $\leftarrow$ \textbf{Update}($\sigma$, $op$, $in$, \texttt{EDB}): The protocol runs between a client and a server. The client input is a state $\sigma$, an operation $op=(add,del)$ she wants to perform and a collection of $in=(ind, \textbf{w})$ pairs that are going to be modified, where $add,del$ mean the addition and deletion of a document/keyword pair, respectively, $ind$ is the file index and \textbf{w} is a set of keywords. The server input is \texttt{EDB}. \textbf{Update} returns an updated state $\sigma'$ to the client and an updated encrypted database \texttt{EDB}$'$ to the server.
\end{itemize}

\subsection{Security Definition}\label{subsec:security}
The security definition of DSSE is formulated using the following two games: $\verb"DSSEREAL"_{\mathcal{A}}^{\Gamma}(1^{\lambda})$ and $\verb"DSSEIDEAL"_{\mathcal{A},\mathcal{S}}^{\Gamma}(1^{\lambda})$. The $\verb"DSSEREAL"_{\mathcal{A}}^{\Gamma}(1^{\lambda})$ is executed using DSSE. The $\verb"DSSEIDEAL"_{\mathcal{A},\mathcal{S}}^{\Gamma}(1^{\lambda})$ is simulated using the leakage of DSSE. The leakage is parameterized by a function $\mathcal{L}=(\mathcal{L}^{Stp}, \mathcal{L}^{Srch}, \mathcal{L}^{Updt})$, which describes what information is leaked to the adversary $\mathcal{A}$. If the adversary $\mathcal{A}$ cannot distinguish these two games, then we can say there is no other information leaked except the information that can be inferred from the leakage function $\mathcal{L}$. More formally,
\begin{itemize}
    \item $\verb"DSSEREAL"_{\mathcal{A}}^{\Gamma}(1^{\lambda})$: On input a database \texttt{DB}, which is chosen by the adversary $\mathcal{A}$, it outputs \texttt{EDB} by using $\textbf{Setup}(1^{\lambda},$ \texttt{DB}) to the adversary $\mathcal{A}$. $\mathcal{A}$ can repeatedly perform a search query $q$ (or an update query $(op,in$)). The game outputs the results generated by running \textbf{Search}($q$) (or \textbf{Update}($op,in$)) to the adversary $\mathcal{A}$. Eventually, $\mathcal{A}$ outputs a bit.
    \item $\verb"DSSEIDEAL"_{\mathcal{A},\mathcal{S}}^{\Gamma}(1^{\lambda})$: On input a database \texttt{DB} which is chosen by the adversary $\mathcal{A}$, it outputs \texttt{EDB} to the adversary $\mathcal{A}$ by using a simulator $\mathcal{S}(\mathcal{L}^{Stp}(1^{\lambda}$, \texttt{DB})). Then, it simulates the results for the search query $q$ by using the leakage function $\mathcal{S}(\mathcal{L}^{Srch}(q))$ and uses $\mathcal{S}(\mathcal{L}^{Updt}(op,in))$ to simulate the results for update query ($op,in$). Eventually, $\mathcal{A}$ outputs a bit.
\end{itemize}

\begin{definition}
A DSSE scheme $\Gamma$ is $\mathcal{L}$-adaptively-secure if for every PPT adversary $\mathcal{A}$, there exists an efficient simulator $\mathcal{S}$ such that \[|\Pr[\verb"DSSEREAL"_{\mathcal{A}}^{\Gamma}(1^{\lambda})=1]-\Pr[\verb"DSSEIDEAL"_{\mathcal{A},\mathcal{S}}^{\Gamma}(1^{\lambda})=1]|\le negl(1^{\lambda}).\]
\end{definition}

\section{Constructions}\label{sec:construction}
In this section, we give two DSSE constructions. In order to process range queries, we deploy a new binary tree which is modified from the binary tree in \cite{FJKNRS15}. Now, we first give our binary tree used in our constructions. 

\subsection{Binary Tree for Range Queries}\label{subsec:binary-tree}
In a binary tree \texttt{BT}, every node has at most two children named $left$ and $right$. If a node has a child, then there is an edge that connects these two nodes. The node is the parent $parent$ of its child. The root $root$ of a binary tree does not have parent and the leaf of a binary tree does not have any child. In this paper, the binary tree is stored in thew form of linked structures. The first node of \texttt{BT} is the root of a binary tree. For example, the root node of the binary tree \texttt{BT} is \texttt{BT}, the left child of \texttt{BT} is \texttt{BT}.$left$, and the parent of \texttt{BT}'s left child is \texttt{BT}.$left$.$parent$, where \texttt{BT} = \texttt{BT}.$left$.$parent$.

In a complete binary tree \texttt{CBT}, every level, except possibly the last, is completely filled, and all nodes in the last level are as far left as possible (the leaf level may not full). A perfect binary tree \texttt{PBT} is a binary tree in which all internal nodes (not the leaves) have two children and all leaves have the same depth or same level.  Note that, \texttt{PBT} is a special \texttt{CBT}.

\subsection{Binary Database} In this paper, we use binary database \texttt{BDB} which is generated from \texttt{DB}. In \texttt{DB}, keywords (the first row in \ref{fig:BT}.(c)) are used to retrieve the file indices (every column in \ref{fig:BT}.(c)). For simplicity, we map keywords in \texttt{DB} to the values in the range [0,$m-1$] for range queries\footnote{In different applications, we can choose different kinds of values. For instance, audit documents of websites with particular IP addresses. We can search the whole network domain, particular host or application range.}, where $m$ is the maximum number of values. If we want to search the range [0,3], a na\"{i}ve solution is to send every value in the range (0, 1, 2 and 3) to the server, which is not efficient. To reduce the number of keywords sent to the server, we use the binary tree as shown in Fig. \ref{fig:BT}.(a). For the range query [0,3], we simply send the keyword $n_3$ (the minimum nodes to cover value 0, 1, 2 and 3) to the server. In \texttt{BDB}, every node in the binary tree is the keyword of the binary database, and every node has all the file indices for its decedents, as illustrated in Figure \ref{fig:BT}.(d).

As shown in Fig. \ref{fig:BT}.(a), keyword in \texttt{BDB} corresponding to node $i$ (the black integer) is $n_i$ (e.g. the keyword for node 0 is $n_0$.). The blue integers are the keywords in \texttt{DB} and are mapped to the values in the range [0,3]. These values are associated with the leaves of our binary tree. The words in red are the file indices in \texttt{DB}. For every node (keyword), it contains all the file indices in its descendant leaves. Node $n_1$ contains $f_0,f_1,f_2,f_3$ and there is no file in node $n_4$ (See Fig. \ref{fig:BT}.(d)). For a range query $[0,2]$, we need to send the keywords $n_1, n_4$ ($n_1$ and $n_4$ are the minimum number of keywords to cover the range $[0,2]$.) to the server, and the result file indices are $f_0, f_1, f_2$ and $f_3$.

\subsubsection{Bit String Representation.} \label{subsubsec:bs}

We parse the file indices for every keyword in \texttt{BDB} (e.g. every column in Figure \ref{fig:BT}.(d)) into a bit string, which we will use later. Suppose there are $y-1$ documents in our \texttt{BDB}, then we need $y$ bits to represent the existence of these documents. The highest bit is the sign bit (0 means positive and 1 means negative). If $f_i$ contains keyword $n_j$, then the $i$-th bit of the bit string for $n_j$ (every keyword has a bit string) is set to 1. Otherwise, it is set to 0. For update, if we want to add a new file index $f_i$ (which also contains keyword $n_j$) to keyword $n_j$, we need a positive bit string, where the $i$-th bit is set to 1 and all other bits are set to 0. Next, we add this bit string to the existing bit string associated with $n_j$ \footnote{Note that, in the range queries, the bit strings are bit exclusive since a file is corresponded to one value only.}. Then, $f_i$ is added to the bit string for $n_j$. If we want to delete file index $f_i$ from the bit string for $n_j$, we need a negative bit string (the most significant bit is set to 1), the $i$-th bit is set to 1 and the remaining bits are set to 0. Then, we need to get the complement of the bit string \footnote{In a computer, the subtraction is achieved by adding the complement of the negative bit string.}. Next, we add the complement bit string as in the add operation. Finally, the $f_i$ is deleted from the bit string for $n_j$.

For example, in Fig. \ref{fig:BT}.(b), the bit string for $n_0$ is 000001, and the bit string for $n_4$ is 000000. Assume that we want to delete file index $f_0$ from $n_0$ and add it to $n_4$. First we need to generate bit string 000001 and add it to the bit string (000000) for $n_4$. Next we generate the complement bit string 111111 (the complement of 100001) and add it to 000001 for $n_0$. Then, the result bit strings for $n_0$ and $n_4$ are 000000 and 000001, respectively. As a result, the file index $f_0$ has been moved from $n_0$ to $n_4$.

\begin{figure}[htb]
	\centering
	\includegraphics[width=\textwidth,trim=0 20 0 50]{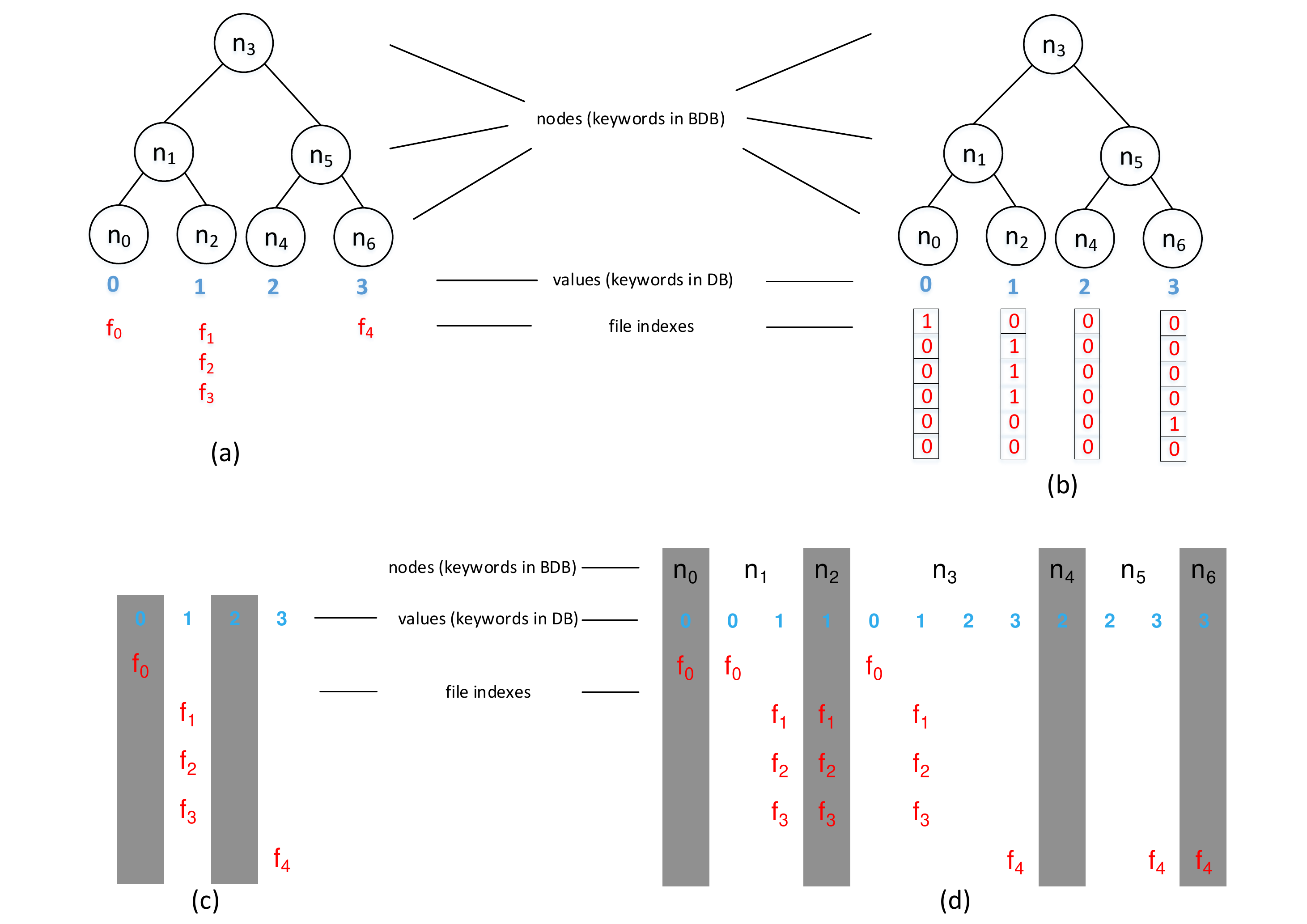}
	\caption{Architecture of Our Binary Tree for Range Queries}
	\label{fig:BT}
\end{figure}

\subsubsection{Binary Tree Assignment and Update.}
As we use the binary tree to support data structure needed in our DSSE, we define the following operations that are necessary to manipulate the DSSE data structure.

	\noindent\textbf{TCon}($m$): For an integer $m$, the operation builds a complete binary tree \texttt{CBT}. \texttt{CBT} has $\lceil log(m)\rceil + 1$ levels, where the root is on the level 0, and the leaves are on the level $\lceil log(m)\rceil$. All leaves are associated with the $m$ consecutive integers from left to right.

	\noindent\textbf{TAssign}(\texttt{CBT}): The operation takes a \texttt{CBT} as an input and outputs an assigned binary tree \texttt{ABT}, where nodes are labelled by appropriate integers. The operation applies \textbf{TAssignSub} recursively. Keywords then are assigned to the node integers.

	\noindent\textbf{TAssignSub}($c$, \texttt{CBT}): For an input pair: a counter $c$ and \texttt{CBT}, the operation outputs an assigned binary tree. It is implemented as a recursive function. It starts from 0 and assigns to nodes incrementally. See Fig. \ref{fig:StpBT} for an example.

\begin{algorithm}[!htb]
	\caption{Our Binary Tree}\label{alg:tree}
	\begin{multicols}{2}
	\underline{\textbf{TCon}($m$)}\\
	\textbf{Input} integer $m$\\
	\textbf{Output} complete binary tree \texttt{CBT}
	\begin{algorithmic}[1]
		\State Construct a \texttt{CBT} with $\lceil log(m)\rceil + 1$ levels.
		\State Set the number of leaves to $m$.
		\State Associate the leaves with $m$ consecutive integers [0,$m$-1] from left to right.
		\State \Return \texttt{CBT}
	\end{algorithmic}

	\underline{\textbf{TAssign}(\texttt{CBT})}\\
    \textbf{Input} complete binary tree \texttt{CBT}\\
    \textbf{Output} assigned binary tree \texttt{ABT}
	\begin{algorithmic}[1]
		\State Counter $c=0$
		\State \textbf{TAssignSub}($c$, \texttt{CBT})
		\State \Return \texttt{ABT}
	\end{algorithmic}

	\underline{\textbf{TAssignSub}($c$, \texttt{CBT})}\\
	\textbf{Input} \texttt{CBT}, counter $c$\\
	\textbf{Output} Assigned binary tree \texttt{ABT}
	\begin{algorithmic}[1]
		\If {\texttt{CBT}.$left\ne\bot$ }
			\State \textbf{TAssignSub}($c$, \texttt{CBT}.$left$)
		\EndIf
		\State Assign \texttt{CBT} with counter $c$.
		\State $c=c+1$
		\If {\texttt{CBT}.$right\ne\bot$}
			\State \textbf{TAssignSub}($c$, \texttt{CBT}.$right$)
		\EndIf
		\State Assign \texttt{CBT} with counter $c$.
		\State $c=c+1$
		\State \Return \texttt{ABT}
	\end{algorithmic}
	
	\underline{\textbf{TGetNodes}($n$, \texttt{ABT})}\\
	\textbf{Input} node $n$, \texttt{ABT}\\
	\textbf{Output} \textbf{NSet}
	\begin{algorithmic}[1]
		\State \textbf{NSet} $\leftarrow$ Empty Set
		\While{$n\ne \perp$}
		\State \textbf{NSet} $\leftarrow$ \textbf{NSet} $\cup$ $n$
        \State $n=n.parent$
		\EndWhile
		\State \Return \textbf{NSet}
	\end{algorithmic}
	
	\underline{\textbf{TUpdate}($add, v$, \texttt{CBT})}\\
	\textbf{Input} op$=add$, value $v$, \texttt{CBT}\\
	\textbf{Output} updated \texttt{CBT}
	\begin{algorithmic}[1]
		\If{\texttt{CBT} $=\bot$}
			\State Create a node.
			\State Associate value $v=0$ to this node.
			\State Set \texttt{CBT} to this node.
		\ElsIf{\texttt{CBT} is \texttt{PBT} or \texttt{CBT} has one node}
			\State Create a new root node $\texttt{root}_n$.
			\State Create a \texttt{VBT} $=$ \texttt{CBT}
			\State \texttt{CBT}$.parent$=\texttt{VBT}.$parent$=$\texttt{root}_n$
			\State \texttt{CBT} $=\texttt{root}_n$
			\State Associate $v$ to the least virtual leaf and set this leaf and its parents as real.
		\Else
			\State Execute line 10.
		\EndIf
		\State \Return \texttt{CBT}
	\end{algorithmic}
	\end{multicols}
\end{algorithm}

	\noindent\textbf{TGetNodes}($n$, \texttt{ABT}): For an input pair: a node $n$ and a tree \texttt{ABT}, the operation generates a collection of nodes in a path from the node $n$ to the root node. This operation is needed for our update algorithm if a client wants to add a file to a leaf (a value in the range). The file is added to the leaf and its parent nodes.

	\noindent\textbf{TUpdate}(add, $v$, \texttt{CBT}): The operation takes a value $v$ and a complete binary tree \texttt{CBT} and updates \texttt{CBT} so the tree contains the value $v$. For simplicity, we consider the current complete binary tree contains values in the range $[0,v-1]$\footnote{Note that, we can use \textbf{TUpdate} many times if we need to update more values.}. Depending on the value of $v$, the operation is executed according to the following cases:
	
	\begin{itemize}
		\item $v=0$: It means that the current complete binary tree is null, we simply create a node and associate value $v=0$ with the node. The operation returns the node as \texttt{CBT}. 
		
		\item $v>0$: If the current complete binary tree is a perfect binary tree \texttt{PBT} or it consists of a single node only, we need to create a virtual binary tree \texttt{VBT}, which is a copy of the current binary tree. Next, we merge the virtual perfect binary tree with the original one getting a large perfect binary tree. Finally, we need to associate the value $v$ with the least virtual leaf (the leftmost virtual leaf without a value) of the virtual binary tree and set this leaf and its parents as real. For example, in Fig. \ref{fig:StpBT}.(a), $v=4$, the nodes with solid line are real and the nodes with dot line are virtual which can be added later. Otherwise, we directly associate the value $v$ to the least virtual leaf and set this leaf and its parents as real \footnote{Only if its parents were virtual, then we need to convert them to real.}. In Fig. \ref{fig:StpBT}.(b), $v=5$.
	
	\end{itemize}

\begin{figure}[!htbp]
	\centering
	\includegraphics[width=\textwidth,trim=0 150 0 150]{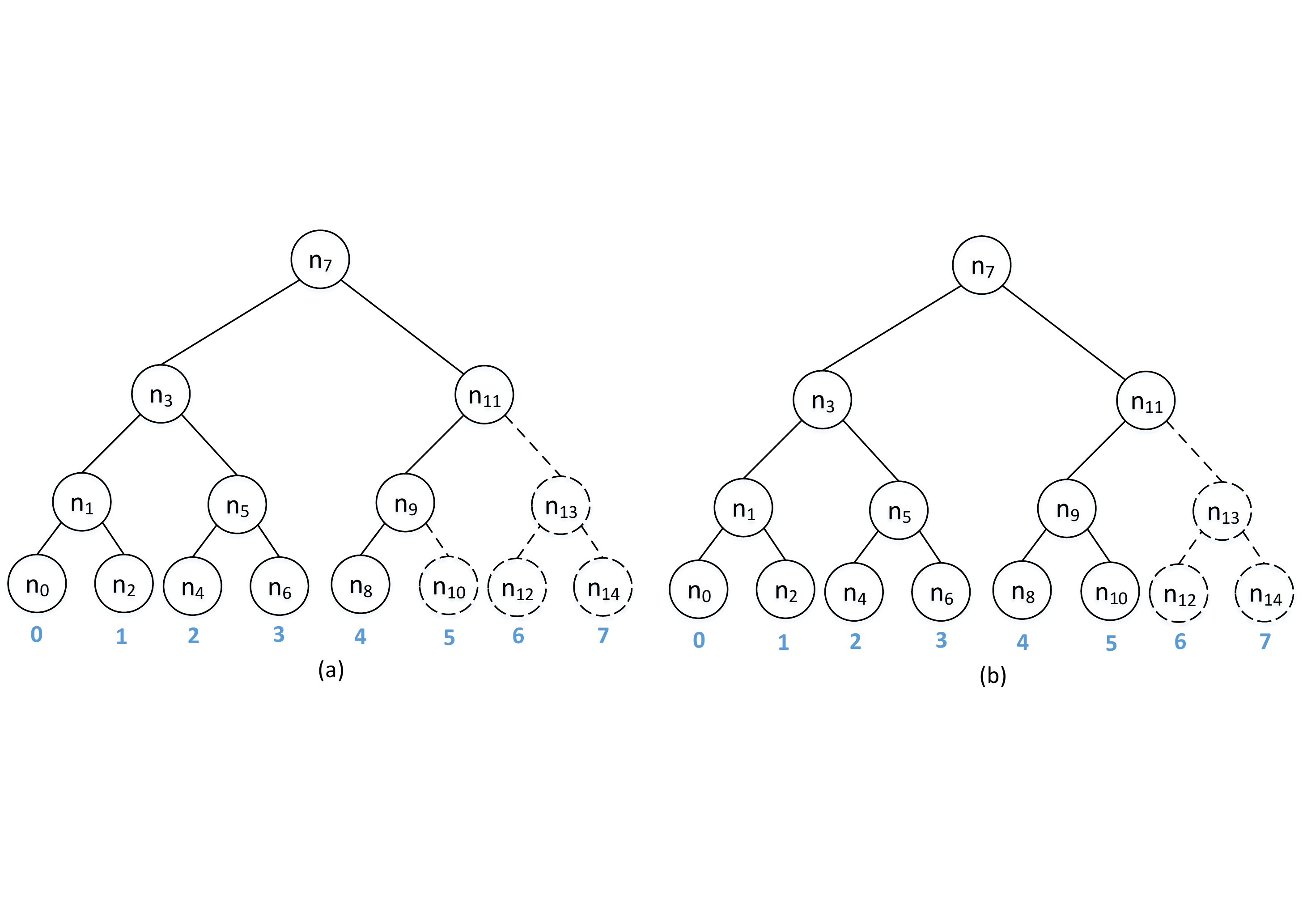}
	\caption{Example of Update Operation}
	\label{fig:StpBT}
\end{figure}

\noindent Note that, in our range queries, we need to parse a normal database \texttt{DB} to its binary form \texttt{BDB}. First, we need to map keywords of \texttt{DB} to integers in the range $[0, |W|-1]$, where $|W|$ is the total number of keywords in \texttt{DB}. Next, we construct a binary tree as described above. The keywords are assigned to the nodes of the binary tree and are associated with the documents of their descendants. For example, In Fig. \ref{fig:BT}.a, the keywords are $\{n_0, n_1, \cdots, n_6\}$ and $\texttt{BDB}(n_0)=\{f_0\}$, $\texttt{BDB}(n_1)=\{f_0,f_1,f_2,f_3\}$.

\subsection{DSSE Range Queries - Construction A}
In this section, we apply our new binary tree to the Bost \cite{Bos16} scheme to support range queries. For performing a ranger query, the client in our scheme first determine a collection of keywords to cover the requested range. Then, she generates the search token corresponding to each node (in the cover) and sends them to the sever, which can be done in a similar way as \cite{Bos16}. Now we are ready to present the first DSSE scheme that supports range queries and is forward-private. The scheme is described in Algorithm \ref{alg:Bost}, where $F$ is a cryptographically strong pseudorandom function (PRF), $H_1$ and $H_2$ are keyed hash functions and $\Pi$ is a trapdoor permutation.  

	\noindent \textbf{Setup}($1^\lambda$): For a security parameter $1^\lambda$ , the algorithm outputs ($\texttt{TPK}, \texttt{TSK},$ $K, \textbf{T}, \textbf{N}, m$), where $\texttt{TPK}$ and $\texttt{TSK}$ are the public key and secret keys of the trapdoor permutation, respectively, $K$ is the secret key of function $F$, \textbf{T}, \textbf{N} are maps and $m$ is the maximum number of the values in our range queries. The map \textbf{N} is used to store the pair keyword/($ST_c$, $c$) (current search token and the counter $c$, please see Algorithm \ref{alg:Bost} for more details.) and is kept by the client. The map \textbf{T} is the encrypted database \texttt{EDB} that used to store the encrypted indices which is kept by the server.

	\noindent \textbf{Search}($[a,b],\sigma,$ $m$, \texttt{EDB}): The protocol is executed between a client and a server. The client asks for documents, whose keywords are in the range $[a,b]$, where $0\le a \le b<m$. The current state of \texttt{EDB} is $\sigma$ and the integer $m$ describes the maximum number of values. Note that knowing $m$, the client can easily construct the complete binary tree. The server returns a collection of file indices of requested documents.

\begin{algorithm}[!htb]
    \caption{Construction A}\label{alg:Bost}
    \begin{multicols}{2}
    \underline{\textbf{Setup}($1^{\lambda}$)}\\
    \textbf{Input} security parameter $1^{\lambda}$\\
    \textbf{Output} $(\texttt{TPK}, \texttt{TSK}, K, \textbf{T}, \textbf{N}, m)$
    %\textit{Client:}
    \begin{algorithmic}[1]
        \State $K\leftarrow\{0,1\}^{\lambda}$
        \State $(\texttt{TSK}, \texttt{TPK})\leftarrow \texttt{TKeyGen}(1^{\lambda})$
        \State \textbf{T}, \textbf{N} $\leftarrow$ empty map
        \State $m=0$
        \State \Return $(\texttt{TPK}, \texttt{TSK}, K, \textbf{T}, \textbf{N}, m)$
    \end{algorithmic}

    \underline{\textbf{Search}($[a,b],\sigma,$ $m$, \texttt{EDB})}\\
    \textit{Client:}\\
    \textbf{Input} $[a,b],\sigma,$ $m$\\
    \textbf{Output} $(K_n,ST_c,c)$
    \begin{algorithmic}[1]
    	\State \texttt{CBT} $\leftarrow$ \textbf{TCon}($m$)
    	\State \texttt{ABT} $\leftarrow$ \textbf{TAssign}(\texttt{CBT})
    	\State \textbf{RSet} $\leftarrow$ Find the minimum nodes to cover $[a,b]$ in \texttt{ABT}
    	\For{$n\in \textbf{RSet}$}
        	\State $K_{n}\leftarrow F_{K}(n)$
        	\State $(ST_c,c)\leftarrow\textbf{N}[n]$
        	\If{$(ST_c,c)\ne\perp$}
        		\State Send $(K_n,ST_c,c)$ to the server.
        	\EndIf
        \EndFor
        \algstore{break}
    \end{algorithmic}
	
	\textit{Server:}\\
	\textbf{Input} $(K_n,ST_c,c),$ \texttt{EDB}\\
	\textbf{Output} ($ind$)
	\begin{algorithmic}[1]
		\algrestore{break}
		\State Upon receiving $(K_n,ST_c,c)$
		\For{$i=c$ to 0}
			\State $UT_i\leftarrow H_1(K_n,ST_i)$
			\State $e\leftarrow \textbf{T}[UT_i]$
			\State $ind \leftarrow e\oplus H_2(K_n,ST_i)$
			\State Output the $ind$
			\State $ST_{i-1}\leftarrow\Pi(\texttt{TPK},ST_i)$
		\EndFor
	\end{algorithmic}

    \underline{\textbf{Update}($add, v, ind,\sigma,$ $m$, \texttt{EDB})}\\
	\textit{Client:}\\
	\textbf{Input} $add, v, ind,\sigma,$ $m$\\
	\textbf{Output} $(UT_{c+1},e)$
	\begin{algorithmic}[1]
		\State \texttt{CBT} $\leftarrow$ \textbf{TCon}($m$)
        \If{$v=m$}
            \State \texttt{CBT}$\leftarrow$\textbf{TUpdate}($add, v,$ \texttt{CBT})
        	\State $m\leftarrow m+1$
        	\If{\texttt{CBT} added a new root}
                \State $(ST_c,c)\leftarrow \textbf{N}[\texttt{root}_o]$
                \State $\textbf{N}[\texttt{root}_n]\leftarrow(ST_c,c)$
        	\EndIf

        \State Get the leaf $n_v$ of value $v$.
        \State \texttt{ABT} $\leftarrow$ \textbf{TAssign}(\texttt{CBT})
		\State $\textbf{NSet}\leftarrow\textbf{TGetNodes}(n_v$, \texttt{ABT})
		\For {every node $n\in \textbf{NSet}$}
            \State $K_n\leftarrow F_{K}(n)$
		    \State $(ST_c,c)\leftarrow\textbf{N}[n]$
		    \If{$(ST_c,c)=\perp$}
			     \State $ST_0\leftarrow \mathcal{M},c\leftarrow -1$
		    \Else
			     \State $ST_{c+1}\leftarrow\Pi^{-1}(\texttt{TSK},ST_c)$
		    \EndIf
	
		\State $\textbf{N}[n]\leftarrow (ST_{c+1}, c+1)$
		\State $UT_{c+1}\leftarrow H_1(K_n,ST_{c+1})$
		\State $e\leftarrow ind\oplus H_2(K_n,ST_{c+1})$
		\State Send $(UT_{c+1},e)$ to the Server.
        \EndFor
        \ElsIf{$v<m$}
            \State Execute line 9-24.
        \EndIf
    \algstore{break}
	\end{algorithmic}

	\textit{Server:}\\
	\textbf{Input} $(UT_{c+1},e),$ \texttt{EDB}\\
	\textbf{Output} \texttt{EDB}
	\begin{algorithmic}[1]
		\algrestore{break}
        \State Upon receiving $(UT_{c+1},e)$
		\State Set $\textbf{T}[UT_{c+1}]\leftarrow e$
	\end{algorithmic}
	\end{multicols}
\end{algorithm}

	\noindent \textbf{Update}($add, v, ind, \sigma,$ $m$, \texttt{EDB}): The protocol is performed jointly by a client and server. The client wishes to add an integer $v$ together with a file index $ind$ to \texttt{EDB}. The state of \texttt{EDB} is $\sigma$, the number of values $m$. There are following three cases:

\begin{itemize}
	\item $v<m$: The client simply adds $ind$ to the leaf, which contains value $v$ and its parents (See line 9-24 in Algorithm \ref{alg:Bost}). This is a basic update, which is similar to the one from \cite{Bos16}.

	\item $v=m$: The client first updates the complete binary tree to which she adds the value $v$. If a new root is added to the new complete binary tree, then the server needs to add all file indices of the old complete binary tree to the new one. Finally, the server needs to add $ind$ to the leaf, which contains value $v$ and its parents.

	\item $v>m$: The client uses \textbf{Update} as many times as needed. For simplicity, we only present the simple case $v=m$, i.e., the newly added value $v$ equals the maximum number of values of the current range $[0, m-1]$, in the description of Algorithm \ref{alg:Bost}. 
\end{itemize}

\noindent The DSSE supports range queries at the cost of large client storage, since the number of search tokens is linear in the number of all nodes of the current tree instead of only leaves. In \cite{Bos16}, the number of entries at the client is $|W|$, while it would be roughly $2|W|$ in this construction. Moreover, the communication cost is heavy since the server needs to return all file indices to the client for every search.  To overcome the weakness, we give a new construction with lower client storage and communication cost in the following section.

\subsection{DSSE Range Queries - Construction B} \label{subsec:b}
In this section, we give the second construction by leveraging the Paillier cryptosystem \cite{Pai99} and bit string representation, which significantly reduce the client storage and communication cost compared with the first one at the cost of losing forward privacy. With the the homomorphic addition property of the Paillier cryptosystem, we can add and delete the file indices by parsing them into binary strings, as illustrated in Section \ref{subsubsec:bs}. Next we briefly describe our second DSSE, which can not only support range queries but also achieve backward privacy. The scheme is described in Algorithm \ref{alg:Paillier}. 

	\noindent \textbf{Setup}($1^\lambda$): For a security parameter $1^\lambda$ , the algorithm returns ($\texttt{PK}, \texttt{SK}, K,$ $\textbf{T}, m$), where $\texttt{PK}$ and $\texttt{SK}$ are the public and secret keys of the Paillier cryptosystem, respectively, $K$ is the secret key of a PRF $F$, $m$ is the maximum number of values which can be used to reconstruct the binary tree and the encrypted database \texttt{EDB} is stored in a map \textbf{T} which is kept by the server.

	\noindent \textbf{Search}($[a,b],\sigma,$ $m$, \texttt{EDB}): The protocol is executed between a client and a server. The client queries for documents, whose keywords are in the range $[a,b]$, where $0\le a \le b<m$. $\sigma$ is the state of \texttt{EDB}, and integer $m$ specifies the maximum values for our range queries. The server returns encrypted file indices $e$ to the client, who can decrypt $e$ by using the secret key $\texttt{SK}$ of Pailler Cryptosystem and obtain the file indices of requested documents.

	\noindent \textbf{Update}($op, v, ind, \sigma,$ $m$, \texttt{EDB}): The protocol runs between a client and a server. A requested update is named by the parameter $op$. The integer $v$ and the file index $ind$ specifies the tree nodes that need to be updated. The current state $\sigma$, the integer $m$ and the server with input \texttt{EDB}. If $op=add$, the client generates a bit string as prescribed in Section \ref{subsubsec:bs}. In case when $op=delete$, the client creates the complement bit string as given in Section \ref{subsubsec:bs}. The bit string $bs$ is encrypted using the Paillier cryptosystem. The encrypted string is denoted by $e$. There are following three cases:
	
	\begin{algorithm}[!htb]
		\caption{Construction B}\label{alg:Paillier}
		\begin{multicols}{2}
			\underline{\textbf{Setup}($1^{\lambda})$}\\
			\textbf{Input} security parameter $1^{\lambda}$\\
			\textbf{Output} $(\texttt{PK}, \texttt{SK}, K, \textbf{T}, m)$
			\begin{algorithmic}[1]
				\State $K\leftarrow\{0,1\}^{\lambda}$
				\State $(\texttt{SK}, \texttt{PK})\leftarrow \texttt{KeyGen}(1^{\lambda})$
				\State \textbf{T} $\leftarrow$ empty map
				\State $m=0$
				\State \Return $(\texttt{PK}, \texttt{SK}, K, \textbf{T}, m)$
			\end{algorithmic}
			
			\underline{\textbf{Search}($[a,b],\sigma,$ $m$, \texttt{EDB})}\\
			\textit{Client:}\\
			\textbf{Input} $[a,b],\sigma,m$\\
			\textbf{Output} $(UT_n)$
			\begin{algorithmic}[1]
				\State \texttt{CBT} $\leftarrow$ \textbf{TCon}($m$)
				\State \texttt{ABT} $\leftarrow$ \textbf{TAssign}(\texttt{CBT})
				\State \textbf{RSet} $\leftarrow$ Find the minimum nodes to cover $[a,b]$ in \texttt{ABT}
				\For{$n\in \textbf{RSet}$}
				\State $UT_n\leftarrow F_{K}(n)$
				\State Send $UT_n$ to the server.
				\EndFor
				\algstore{break}
			\end{algorithmic}
			
			\textit{Server:}\\
			\textbf{Input} $(UT_n),$ \texttt{EDB}\\
			\textbf{Output} $(e)$
			\begin{algorithmic}[1]
				\algrestore{break}
				\State Upon receiving $UT_n$
				\State $e\leftarrow \textbf{T}[UT_n]$
				\State Send $e$ to the Client.
				%\algstore{break}
			\end{algorithmic}
			
			\underline{\textbf{Update}($op,v,ind,\sigma,$ $m$, \texttt{EDB})}\\
			\textit{Client:}\\
			\textbf{Input} $op,v,ind,\sigma,m$\\
			\textbf{Output} $(UT_n,e)$
			\begin{algorithmic}[1]
				\State \texttt{CBT} $\leftarrow$ \textbf{TCon}($m$)
				\If{$v=m$}
				\State \texttt{CBT} $\leftarrow$ \textbf{TUpdate}($add,v,$ \texttt{CBT})
				\State $m\leftarrow m+1$
				\If{\texttt{CBT} added a new root}
				\State $UT_{\texttt{root}_o}\leftarrow F_K(\texttt{root}_o)$
				\State $UT_{\texttt{root}_n}\leftarrow F_K(\texttt{root}_n)$
				\State $e\leftarrow\textbf{T}[UT_{\texttt{root}_o}]$
				\State $\textbf{T}[UT_{\texttt{root}_n}]\leftarrow e$
				\EndIf
				\State Get the leaf $n_v$ of value $v$.
				\State \texttt{ABT} $\leftarrow$ \textbf{TAssign}(\texttt{CBT})
				\State $\textbf{NSet}\leftarrow\textbf{TGetNodes}(n_v, \texttt{ABT})$
				\If{$op=add$}
				\State Generate the bit string $bs$ as state in Bit String Representation of Section \ref{subsubsec:bs}.
				\ElsIf{$op=del$}
				\State Generate the complement bit string $bs$ as state in Bit String Representation of Section \ref{subsubsec:bs}.
				\EndIf
				\For {every node $n\in \textbf{NSet}$}
				\State $UT_n\leftarrow F_K(n)$
				\State $e\leftarrow \texttt{Enc}(\texttt{PK}, bs)$
				\State Send $(UT_n,e)$ to the server.
				\EndFor
				\ElsIf{$v<m$}
				\State Execute line 11-23.
				\EndIf
			\end{algorithmic}
			
			\textit{Server:}\\
			\textbf{Input} $(UT_n,e),$ \texttt{EDB}\\
			\textbf{Output} \texttt{EDB}
			\begin{algorithmic}[1]
				\State Upon receiving $(UT_n,e)$
				\State $e'\leftarrow \textbf{T}[UT_n]$
				\If{$e'\ne \perp$}
				\State $e\leftarrow e\cdot e'$
				\EndIf
				\State $\textbf{T}[UT_n]\leftarrow e$
			\end{algorithmic}
			
		\end{multicols}
	\end{algorithm}

\begin{itemize}
	\item $v<m$: The client sends the encrypted bit string $e$ with the leaf $n_v$ containing value $v$ and its parents to server. Next the server adds $e$ with the existing encrypted bit strings corresponding to the nodes specified by the client. See line 11-23 in Algorithm \ref{alg:Paillier} which is similar to the update in Algorithm \ref{alg:Bost}.
	
	\item $v=m$: The client first updates the complete binary tree to which she adds the value $v$. If a new root is added to the new complete binary tree, then the client retrieves the encrypted bit string of the root (before update). Next the client adds it to the new root by sending it with the new root to the server. Finally, the client adds $e$ to the leaf that contains value $v$ and its parents as in $v<m$ case.
	
	\item $v>m$: The client uses \textbf{Update} as many times as needed. For simplicity, we only consider $v=m$, where $m$ is the number of values in the maximum range.
\end{itemize}

\noindent In this construction, it achieves backward privacy. Moreover, the communication overhead between the client and the server is significantly reduced due to the fact that for each query, the server returns a single ciphertext to the client at the cost of supporting small number of documents. Since, in Paillier cryptosystem, the length of the message is usually small and fixed (e.g. 1024 bits).

This construction can be applied to applications, where the number of documents is small and simultaneously the number of keywords can be large. The reason for this is the fact that for a given keyword, the number of documents which contain it is small. Consider a temperature forecast system that uses a database, which stores records from different sensors (IoT) located in different cities across Australia. In the application, the cities (sensors) can be considered as documents and temperature measurements can be considered as the keywords. For example, Sydney and Melbourne have the temperature of 18$^{\circ}$C. Adelaide and Wollongong have got 17$^{\circ}$C and 15$^{\circ}$C, respectively. If we query for cities, whose temperature measurements are in the range from 17 to 18$^{\circ}$C, then the outcome includes Adelaide, Sydney and Melbourne. Here, the number of cities (documents) is not large. The number of different temperature measurements (keywords) can be large depending on requested precision.

\section{Security Analysis}\label{sec:security}
Similar to \cite{FJKNRS15}, for a range query $q=[a,b]$, let $\{n_{c_1},...,n_{c_t}\}$ be the tree cover of interval $[a,b]$. We consider $n_{c_i}$ as a keyword and parse a range query into several keywords. Before define the leakage functions, we define a search query $q=(t,[a,b])=\{(t,n_{c_1}),\cdots,(t,n_{c_t}) \}$. For an update query, if we want to update a file $ind$ with value $v$, we may need to update the corresponding leaf node and its parents in the tree denoted as $\{n_{u_1},\cdots,n_{u_t}\}$. We define an update query $u=(t,op,(v,ind))=\{(t,op,(n_{u_1},ind)),\cdots,(t,op,(n_{u_t},ind))\}$. For a list of search query $Q=\{(t,n) : (t,n)\in \{q\}\}$ and a list of update query $Q'=\{(t,op,(n,ind)):(t,op,(n,ind))\in \{u\}\}$ Then, following \cite{Bos16}, the leakage to the server is summarized as follows:
\begin{itemize}
	\item Search pattern $\texttt{sp}(n)=\{t:(t,n)\in Q\}$, it leaks the timestamp $t$ that the same search query on $n$.
	\item History $\texttt{Hist}(n)=\{(t,op,ind) : (t,op,(n,ind))\in Q'\}$, the history of keyword $n$. It includes all the updates made to $\texttt{DB}(n)$ and when the update happened.
	\item contain pattern $\texttt{cp}(n)=\{t': \texttt{DB}(n)\subseteq \texttt{DB}(n')$ and $t'<t, (t',n'),(t,n)\in Q \}$, it leaks the time $t'$ of previous search query on keyword $n'$, where $\texttt{DB}(n)\subseteq \texttt{DB}(n')$. Note that, $\texttt{cp}(n)$ is an inherited leakage for range queries when the file indices are revealed to the server. If a query $n$ is a subrange of query $n'$, then the file index set for $n$ will also be a subset of the file index set for $n'$.
	%\item time $\texttt{HistUp}(n)=\{t_i,ut_i\}_{i=1}^{u_n}$, where $u_n$ is the total number of updates made to keyword $n$ and $ut_i$ is the update token for each update.
\end{itemize} 
%Note that, contain pattern $\texttt{cp}(w)$ is an inherited leakage for range queries when the file indices are revealed to the server. If a query $w'$ is a subrange of query $w$, then the file index set for $w'$ will also be a subset of the file index set for $w$.

\subsection{Forward Privacy}
Following \cite{Bos16}, forward privacy means that an update does not leak any information about keywords of updated documents matching a query we previously issued.  A formal definition is given below:

\begin{definition}(\cite{Bos16})\label{def:fs}
	A $\mathcal{L}$-adaptively-secure DSSE scheme $\Gamma$ is forward-private if the update leakage function $\mathcal{L}^{Updt}$ can be written as $$\mathcal{L}^{Updt}(op,in)=\mathcal{L}'(op,{(ind_i,\mu_i)})$$ where ${(ind_i,\mu_i)}$ is the set of modified documents paired with number $\mu_i$ of modified keywords for the updated document $ind_i$.
\end{definition}

%Note that, our second construction can only achieve a weak forward security. Because it leaks the update corresponding to a keyword but not the keyword/document pairs. A formal definition is given below:

\begin{comment}
\begin{definition}\label{def:fs2}
	A $\mathcal{L}$-adaptively-secure DSSE scheme $\Gamma$ is weak forward-private if the update leakage function $\mathcal{L}^{Updt}$ can be written as $$\mathcal{L}^{Updt}(op,in)=\mathcal{L}'(op, \texttt{Time}(w))$$ where $\texttt{Time}(w)$ is the number of updates made to keyword $w$ and when the update happened.
\end{definition}
\end{comment}

\subsection{Construction A}
Since the first DSSE construction is based on \cite{Bos16}, it inherits security of the original design. Adaptive security of the construction A can be proven in the Random Oracle Model and is a modification of the security proof of \cite{Bos16}. We refer readers to Appendix for the full proof.

\begin{theorem}\label{th:af}
(Adaptive forward privacy of A). Let $\mathcal{L}_{\Gamma_A}=(\mathcal{L}_{\Gamma_A}^{Srch}$, $\mathcal{L}_{\Gamma_A}^{Updt})$, where $\mathcal{L}_{\Gamma_A}^{Srch}(n)$ $=(\texttt{sp}(n),\texttt{Hist}(n),\texttt{cp}(n))$, $\mathcal{L}_{\Gamma_A}^{Updt}(add,n,ind)=\perp$. The construction A is $\mathcal{L}_{\Gamma_A}$-adaptively forward-private.
\end{theorem}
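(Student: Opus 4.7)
The plan is to build a simulator $\mathcal{S}=(\mathcal{S}^{Stp},\mathcal{S}^{Srch},\mathcal{S}^{Updt})$ whose view is computationally indistinguishable from $\verb"DSSEREAL"$, following Bost's template for $\Sigma o\varphi o\varsigma$ with modifications that handle the binary-tree node namespace and the range-query contain pattern $\texttt{cp}(n)$. Because $\mathcal{L}^{Updt}=\perp$, the simulator must be able to produce each update message using no information about the touched node. For \textbf{Setup}, $\mathcal{S}$ generates $(\texttt{TPK},\texttt{TSK})\leftarrow\texttt{TKeyGen}(1^\lambda)$ honestly and initialises $\mathbf{T}$ empty. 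For each $\textbf{Update}$, $\mathcal{S}$ simply samples a fresh uniform $\overline{UT}$ and a fresh uniform $\overline{e}$, writes $\mathbf{T}[\overline{UT}]\leftarrow\overline{e}$, and records the pair together with its timestamp.

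For a $\textbf{Search}$ on some abstract keyword $n$, $\mathcal{S}$ receives $(\texttt{sp}(n),\texttt{Hist}(n),\texttt{cp}(n))$. Using $\texttt{sp}(n)$ it checks whether $n$ has been searched before. If so, it replays the previously used $K_n$ and advances $ST_c$ by one application of $\Pi^{-1}$ for each update in $\texttt{Hist}(n)$ that occurred after the last search. If not, $\mathcal{S}$ samples $K_n\leftarrow\{0,1\}^\lambda$ uniformly and a fresh terminal $ST_c$ uniformly from $\mathcal{M}$, then walks \emph{forward} via $ST_{i-1}\leftarrow\Pi(\texttt{TPK},ST_i)$ (which needs only the public key) to obtain the full chain $ST_c,ST_{c-1},\dots,ST_0$. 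Let $(\overline{UT}^{(i)},\overline{e}^{(i)})$ be the simulated update pair produced at the $i$-th update timestamp listed in $\texttt{Hist}(n)$, and let $ind_i$ be the index revealed for that update. The simulator programs the random oracles by
\begin{align*}
H_1(K_n,ST_i) &:= \overline{UT}^{(i)}, \\
H_2(K_n,ST_i) &:= \overline{e}^{(i)} \oplus ind_i,
\end{align*}
for $i=0,\dots,c$, and sends $(K_n,ST_c,c)$ to the adversary on behalf of the client for each node in the search cover.

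Indistinguishability would be established by a standard sequence of hybrids. Game $G_0$ is $\verb"DSSEREAL"$. In $G_1$ the PRF $F_K$ is replaced by a lazy random function, with distinguishing advantage bounded by $\verb"Adv"^{\verb"PRF"}_F$. In $G_2$ we defer the sampling of the initial search token: since $\Pi$ is a permutation on $\mathcal{M}$, sampling $ST_0\leftarrow\mathcal{M}$ uniformly and iterating $\Pi^{-1}$ is perfectly equivalent to sampling the terminal $ST_c$ uniformly and iterating $\Pi$, so this hop is free. In $G_3$ the honest evaluations of $H_1$ and $H_2$ inside $\textbf{Update}$ are replaced by uniformly random $\overline{UT},\overline{e}$ that are later reprogrammed at search time; this change is statistically perfect unless the adversary has queried $H_1$ or $H_2$ on some $(K_n,ST_i)$ before the corresponding search. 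Game $G_3$ is exactly $\verb"DSSEIDEAL"$ with the simulator above, and $\texttt{cp}(n)$ appears as an automatic consistency property rather than as an explicit simulation cost, because $\mathcal{S}$ programs $H_2$ using the true $ind_i$ from $\texttt{Hist}(n)$ independently for each node's key $K_n$.

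The main obstacle is bounding the programming-failure event in the $G_2\to G_3$ step and reducing it to $\verb"Adv"^{\verb"OW"}_\Pi$. Before the first search on a node $n$, the adversary only sees the random images $\overline{UT}^{(i)},\overline{e}^{(i)}$ and never sees $K_n$ or any $ST_i$; to have pre-queried an RO on $(K_n,ST_i)$ it would have to either guess the uniform $K_n\in\{0,1\}^\lambda$ or, given a later-revealed $ST_c$, reconstruct an earlier $ST_i$ without the trapdoor, which is precisely an inversion of $\Pi$ under the public key. A standard hybrid argument over the polynomially many (node, counter) pairs, identical in structure to Bost's Theorem on the forward privacy of $\Sigma o\varphi o\varsigma$, converts any such distinguisher into a TDP inverter; the additional bookkeeping for the binary tree is harmless because each tree node $n$ is treated as an independent keyword with its own key $K_n$, and the range-query cover $\textbf{RSet}$ only invokes the single-keyword simulation once per node. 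This yields the bound $\verb"Adv"^{G_0\text{ vs }G_3}_\mathcal{A}\le\verb"Adv"^{\verb"PRF"}_F+q\cdot\verb"Adv"^{\verb"OW"}_\Pi+\verb"negl"(\lambda)$, completing the proof.
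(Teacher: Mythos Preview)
Your proposal is correct and follows essentially the same hybrid argument as the paper: replace $F$ by a lazily sampled random function, replace the $H_1,H_2$ evaluations in \textbf{Update} by fresh random strings that are programmed into the random oracles at \textbf{Search} time, and bound the ``bad'' event (the adversary has already queried an oracle on some not-yet-revealed $(K_n,ST_i)$) via the one-wayness of $\Pi$. The only cosmetic differences are that the paper splits the $H_1$ and $H_2$ hops into two separate games (giving the constant $2N$ in front of $\verb"Adv"^{\verb"OW"}_\Pi$), and that the paper's simulator explicitly branches on $\texttt{cp}(n)$ to reuse a previously issued token chain, whereas you argue---also correctly---that programming $H_2$ with the true $ind_i$ from $\texttt{Hist}(n)$ already makes the containment pattern observable to the server without any extra simulation step.
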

	
Compared with \cite{Bos16}, this construction additionally leaks the contain pattern $\texttt{cp}$ as described in Section \ref{subsec:security}. Other leakages are exactly the same as \cite{Bos16}. Since the server executes one keyword search and update one keyword/file-index pair at a time. Note that the server does not know the secret key of the trapdoor permutation, so it cannot learn anything about the pair even if the keyword has been searched by the client previously.

\subsection{Backward Privacy}
%Before giving the definition of backward privacy, we define a new leakage function $\texttt{HistUp}=\{t: (t,op,(n,bs))\in Q'\}$ which leaks the timestamp $t$ that when the update happens corresponding to keyword $n$. In Construction B, we use a bit string $bs$ to represent file identifiers.
Backward privacy means that a search query on keyword $n$ does not leak the file indices that previously added and later deleted. More formally, we modify the Type I definition of \cite{BMO17}. It leaks keyword $n$ has been updated\footnote{Instead of leaking the keyword $n$ in the plaintext form, it may be leaked in the masked form.}, the total number of updates on $n$. More formally,
\begin{definition}\label{def:bs}
	A $\mathcal{L}$-adaptively-secure DSSE scheme $\Gamma$ is backward-private if the the search and update leakage functions $\mathcal{L}^{Srch}$, $\mathcal{L}^{Updt}$ can be written as  $\mathcal{L}^{Updt}(n)=\mathcal{L}'(n)$, $\mathcal{L}^{Srch}=\mathcal{L}''(\texttt{sp}(n))$.
\end{definition}

\subsection{Construction B}
The adaptive security of second DSSE construction relies on the semantic security of Paillier cryptosystem. All file indices are encrypted using the public key of Paillier cryptosystem. Without the secret key, the server cannot learn anything from the ciphertext. Note that, during update, this construction leaks which keyword has been updated. We refer readers to Appendix for the full proof.

\begin{comment}
\begin{theorem}\label{th:bf}
(Adaptive weak forward privacy of B). Let $\mathcal{L}_{\Gamma_B}=(\mathcal{L}_{\Gamma_B}^{Srch}$, $\mathcal{L}_{\Gamma_B}^{Updt})$, where $\mathcal{L}_{\Gamma_B}^{Srch}(n)$ $=(\texttt{sp}(n))$, $\mathcal{L}_{\Gamma_B}^{Updt}(op,n,ind)=(\texttt{Time}(n))$. Construction B is $\mathcal{L}_{\Gamma_B}$-adaptively forward-private.
\end{theorem}

\begin{proof}
(Sketch) In construction B, for the update, we only leak the number of updates corresponding to the queried keywords \textbf{n} rather than hide the keyword/document pairs as in \cite{Bos16}. Since all cryptographic operations are performed at the client side where no keys are revealed to the server, the server can learn nothing from the update, given that the Paillier cryptosystem scheme is IND-CPA secure. We can simulate the $\verb"DSSEREAL"$ as in Algorithm \ref{alg:Paillier} and simulate the $\verb"DSSEIDEAL"$ by encrypting all 0's strings for the \texttt{EDB}. The adversary $\mathcal{A}$ can not distinguish the real ciphertext from the ciphertext of 0's. Then, $\mathcal{A}$ cannot distinguish $\verb"DSSEREAL"$ from $\verb"DSSEIDEAL"$. Hence, our Construction B achieves weak forward privacy.\hfill{$\Box$}
\end{proof}
\end{comment}

\begin{theorem}\label{th:bB}
(Adaptive backward privacy of B).  Let $\mathcal{L}_{\Gamma_B}=(\mathcal{L}_{\Gamma_B}^{Srch}$, $\mathcal{L}_{\Gamma_B}^{Updt})$, where $\mathcal{L}_{\Gamma_B}^{Srch}(n)=(\texttt{sp}(n))$, $\mathcal{L}_{\Gamma_B}^{Updt}(op,n,ind)=(n)$. Construction B is $\mathcal{L}_{\Gamma_B}$-adaptively backward-private.
\end{theorem}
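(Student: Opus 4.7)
\textbf{Proof plan for Theorem \ref{th:bB}.} The plan is to exhibit a PPT simulator $\mathcal{S}$ that, given only $\mathcal{L}_{\Gamma_B}$, produces a transcript indistinguishable from $\verb"DSSEREAL"_{\mathcal{A}}^{\Gamma_B}$. I would set up $\mathcal{S}$ as follows. During setup, $\mathcal{S}$ runs $\texttt{KeyGen}(1^{\lambda})$ honestly to obtain $(\texttt{PK},\texttt{SK})$, initializes an empty server map $\mathbf{T}$, and maintains an internal table $\mathbf{U}$ from node identifiers to freshly sampled random strings (the role of $F_K$ outputs). On an update leakage $\mathcal{L}^{Updt}(op,n,ind)=(n)$, $\mathcal{S}$ reads or samples $UT_n \leftarrow \mathbf{U}[n]$ uniformly at random, computes a ``dummy'' ciphertext $e \leftarrow \texttt{Enc}(\texttt{PK},0)$ of the appropriate bit-length, and posts $(UT_n,e)$ (homomorphically multiplying into $\mathbf{T}[UT_n]$ exactly as the real server would). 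On a search with leakage $\texttt{sp}(n)$, $\mathcal{S}$ runs the public tree algorithms $\textbf{TCon}$, $\textbf{TAssign}$ on the current $m$ and computes the cover of the announced range; for every cover node $n'$ it replays the token $\mathbf{U}[n']$ already committed during earlier updates.

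The argument would proceed by a three-step hybrid. In $\textbf{Hyb}_0$ we run $\verb"DSSEREAL"$. In $\textbf{Hyb}_1$ we replace the PRF $F_K$ by a truly random function $\mathsf{RF}$; since $F_K$ is only evaluated at the client to derive the node tokens, any distinguisher between $\textbf{Hyb}_0$ and $\textbf{Hyb}_1$ yields a PRF-distinguisher, so the gap is at most $\mathsf{Adv}^{\mathrm{PRF}}_{F}(1^{\lambda})$. In $\textbf{Hyb}_2$ we replace every Paillier ciphertext $\texttt{Enc}(\texttt{PK},bs)$ emitted by \textbf{Update} with $\texttt{Enc}(\texttt{PK},0)$ of the same length; a standard polynomial-size hybrid over the $Q_{upd}$ update operations, each step embedding an IND-CPA challenge of $\Sigma$, bounds the gap by $Q_{upd}\cdot \mathsf{Adv}^{\mathrm{IND\text{-}CPA}}_{\Sigma}(1^{\lambda})$. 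In $\textbf{Hyb}_2$ every update posts a uniformly random token together with an encryption of $0$, and every search re-posts the same random tokens for the cover nodes---this is exactly the output distribution of $\mathcal{S}$, so $\verb"DSSEIDEAL"$ is statistically identical to $\textbf{Hyb}_2$.

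\textbf{Main obstacle.} The delicate part is ensuring that $\mathcal{S}$ uses the \emph{same} token $UT_n$ across an update and a later search, given that the search leakage formally exposes only $\texttt{sp}(n)$ and not $n$ itself. I would resolve this by observing that the range $[a,b]$ and the size $m$ of the tree are carried in the protocol messages themselves, so $\mathcal{S}$ can compute the cover $\{n_{c_1},\ldots,n_{c_t}\}$ deterministically from the query and look up the corresponding entries of $\mathbf{U}$; every ancestor node populated during a prior $\textbf{Update}$ is indexed by the same identifier, so consistency follows automatically. The inherited contain pattern on the server side is also preserved: whenever $\texttt{DB}(n) \subseteq \texttt{DB}(n')$ for cover nodes $n,n'$, their ciphertexts in $\mathbf{T}$ differ only by a Paillier-product of zero-encryptions, and this pattern is identical in $\textbf{Hyb}_2$ and in the simulation. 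Combining the PRF bound, the IND-CPA bound, and the exact simulation in $\textbf{Hyb}_2$ gives the required negligible distinguishing advantage, establishing that $\Gamma_B$ is $\mathcal{L}_{\Gamma_B}$-adaptively backward-private. \hfill$\Box$
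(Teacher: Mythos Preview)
Your proposal is correct and follows essentially the same game-hopping argument as the paper: replace $F_K$ by a random function (PRF advantage), then replace every Paillier plaintext by $0$ (IND-CPA advantage, where your explicit $Q_{upd}$ factor is in fact more careful than the paper's single-term bound), and observe that the resulting hybrid coincides with the simulator's output. The only quibble is your justification for token consistency in \textbf{Search}: the simulator cannot read ``protocol messages'' to recover $[a,b]$ and $m$---it must work from the leakage alone---whereas the paper indexes its $\texttt{Key}$ table by $\hat n=\min\,\texttt{sp}(n)$; since $\mathcal{L}^{Updt}$ already hands the simulator $n$, either bookkeeping convention works once phrased correctly.
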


During the update, the construction B leaks which keyword has been updated. However, it does not leak the type of update (either add or del) on encrypted file indices, because both addition and deletion are achieved by homomorphic addition. Moreover, it does not leak contain pattern $\texttt{cp}$ and the file indices that previously added and later deleted since the file indices have been encrypted, and the server can learn nothing without the secret key.

\section{Conclusion}\label{sec:conclusion}
In this paper, we give two secure DSSE schemes that support range queries. The first DSSE construction applies our binary tree to Bost \cite{Bos16}'s framework which achieves forward privacy. However, it incurs a large storage overhead in the client and a large communication cost between the client and the server. To achieve backward privacy, we propose the second DSSE construction with range queries by applying Paillier cryptosystem and bit string representation. In this construction, we use the fixed update token to reduce the client and the server storage at the cost of losing forward privacy. In addition, it can not support large number of documents. Although the second DSSE construction cannot support large number of documents, it can still be very useful in certain applications. In the future, we would like to construct more scalable DSSE schemes with more expressive queries.

\section*{Acknowledgment}
The authors thank the anonymous reviewers for the valuable comments. This work was supported by the Natural Science Foundation of Zhejiang Province [grant number LZ18F020003], the National Natural Science Foundation of China [grant number 61472364] and the Australian Research Council (ARC) Grant DP180102199. Josef Pieprzyk has been supported by National Science Centre, Poland, project registration number UMO-2014/15/B/ST6/05130.

\bibliographystyle{splncs04}
\bibliography{drq}

\section*{Appendix}

\textbf{Theorem \ref{th:af}.} \emph{Let $\Pi$ be the one-way trapdoor permutation, $F$ be a PRF, and $H_1$, $H_2$ be the hash functions and manipulated as random oracles outputting $\lambda$ bits. Define $\mathcal{L}_{\Gamma_A}=(\mathcal{L}_{\Gamma_A}^{Srch}$, $\mathcal{L}_{\Gamma_A}^{Updt})$, where $\mathcal{L}_{\Gamma_A}^{Srch}(n)=(\texttt{sp}(n),\texttt{Hist}(n),\texttt{cp}(n))$, $\mathcal{L}_{\Gamma_A}^{Updt}(add,n,ind)=\perp$. Construction A is $\mathcal{L}_{\Gamma_A}$-adaptively forward-private.}

\begin{proof} As mentioned before, we parse a range interval into several keywords. Following \cite{Bos16}, we will set a serial of games from $\verb"DSSEREAL"_{\mathcal{A}}^{\Gamma_A}(1^{\lambda})$ to $\verb"DSSEIDEAL"_{\mathcal{A},\mathcal{S}_1}^{\Gamma_A}(1^{\lambda})$.
	
	\textbf{Game} $G_{1,0}$: $G_{1,0}$ is exactly same as the real world game $\verb"DSSEREAL"_{\mathcal{A}}^{\Gamma_A}(1^{\lambda})$.$$\Pr[\verb"DSSEREAL"_{\mathcal{A}}^{\Gamma_A}(1^{\lambda})=1]=\Pr[G_{1,0}=1].$$
	
	\textbf{Game} $G_{1,1}$: Instead of calling $F$ when generating $k_n$, $G_{1,1}$ picks a new random key when it inputs a new keyword $n$, and stores it in a table $\texttt{Key}$ so it can be reused next time. If an adversary $\mathcal{A}$ is able to distinguish between $G_{1,0}$ and $G_{1,1}$, we can then build a reduction to distinguish between $F$ and a truly random function. More formally, there exists an efficient adversary $\mathcal{B}_1$ such that $$\Pr[G_{1,0}=1]-\Pr[G_{1,1}=1]\le \verb"Adv"_{F,\mathcal{B}_1}^{\texttt{prf}}(\lambda).$$
	
	\begin{algorithm}[!htb]
		\caption{\textbf{Game} $G_{1,2}$ and single box for $G_{1,2}'$} \label{alg:aG1}
		\begin{multicols}{2}
			\underline{\textbf{Setup}($1^{\lambda}$)}
			%\textit{Client:}
			\begin{algorithmic}[1]
				\State $(\texttt{TSK}, \texttt{TPK})\leftarrow \texttt{TKeyGen}(1^{\lambda})$
				\State \textbf{T}, \textbf{N} $\leftarrow$ empty map
				\State $m=0$
				\State $bad\leftarrow false$
				\State \Return $(\texttt{TPK}, \texttt{TSK}, K, \textbf{T}, \textbf{N}, m)$
			\end{algorithmic}
			
			\underline{\textbf{Search}($[a,b],\sigma;$ $m$, EDB)}\\
			\textit{Client:}
			\begin{algorithmic}[1]
				\State CBT $\leftarrow$ \textbf{TCon}($m$)
				\State ABT $\leftarrow$ \textbf{TAssign}(CBT)
				\State \textbf{RSet} $\leftarrow$ \textbf{TGetCover}($[a,b],$ ABT)
				\For{$n\in \textbf{RSet}$}
				\State $K_{n}\leftarrow \texttt{Key}(n)$
				\State $(ST_0,\cdots,ST_c,c)\leftarrow\textbf{N}[n]$
				\If{$(ST_c,c)\ne\perp$}
				\For{$i=0$ to $c$}
				\State $H_1(K_n,ST_i)\leftarrow \texttt{UT}[n,i]$
				\EndFor
				\State Send $(K_n,ST_c,c)$ to the server.
				\EndIf
				\EndFor
				\algstore{break}
			\end{algorithmic}
			
			\textit{Server:}
			\begin{algorithmic}[1]
				\algrestore{break}
				\State Upon receiving $(K_n,ST_c,c)$
				\For{$i=c$ to 0}
				\State $UT_i\leftarrow H_1(K_n,ST_i)$
				\State $e\leftarrow \textbf{T}[UT_i]$
				\State $ind \leftarrow e\oplus H_2(K_n,ST_i)$
				\State Output the $ind$
				\State $ST_{i-1}\leftarrow\Pi(\texttt{TPK},ST_i)$
				\EndFor
			\end{algorithmic}
			
			\underline{\textbf{Update}($add, v, ind,\sigma;$ $m$, EDB)}\\
			\textit{Client:}
			\begin{algorithmic}[1]
				\State CBT $\leftarrow$ \textbf{TCon}($m$)
				\If{$v=m$}
				\State CBT$\leftarrow$\textbf{TUpdate}($add, v,$ CBT)
				\State $m\leftarrow m+1$
				\If{CBT added a new root}
				\State $(ST_c,c)\leftarrow \textbf{N}[root_{old}]$
				\State $\textbf{N}[root_{new}]\leftarrow(ST_c,c)$
				\EndIf
				
				\State Get the leaf $n_v$ of value $v$.
				\State ABT $\leftarrow$ \textbf{TAssign}(CBT)
				\State $\textbf{NSet}\leftarrow\textbf{TGetNodes}(n_v$, ABT)
				\For {every node $n\in \textbf{NSet}$}
				\State $K_n\leftarrow \texttt{Key}(n)$
				\State $(ST_c,c)\leftarrow\textbf{N}[n]$
				\If{$(ST_c,c)=\perp$}
				\State $ST_0\leftarrow \mathcal{M},c\leftarrow -1$
				\Else
				\State $ST_{c+1}\leftarrow\Pi^{-1}(\texttt{TSK},ST_c)$
				\EndIf
				
				\State $\textbf{N}[n]\leftarrow (ST_0,\cdots, ST_{c+1}, c+1)$
				\State $UT_{c+1}\leftarrow \{0,1\}^x$
				\If{$H_1(K_n,ST_{c+1}\ne \perp)$}
					\State \fbox{$bad\leftarrow true, UT_{c+1}\leftarrow H_1(K_n,ST_{c+1})$}
				\EndIf
				\State $\texttt{UT}[n,c+1]\leftarrow UT_{c+1}$
				%\State \fbox{$UT_{c+1}\leftarrow H_1(K_n,ST_{c+1})$}
				\State $e\leftarrow ind\oplus H_2(K_n,ST_{c+1})$
				\State Send $(UT_{c+1},e)$ to the Server.
				\EndFor
				\ElsIf{$v<m$}
				\State Execute line 9-24.
				\Else
				\State We can use \textbf{Update} many times.
				\EndIf
				\algstore{break}
			\end{algorithmic}
			
			\textit{Server:}
			\begin{algorithmic}[1]
				\algrestore{break}
				\State Upon receiving $(UT_{c+1},e)$
				\State Set $\textbf{T}[UT_{c+1}]\leftarrow e$
			\end{algorithmic}
			
			%\begin{mdframed}
			%\end{mdframed}
			\underline{${H_1}(k, v)$}\\
			\begin{algorithmic}[1]
				\State  $v'\leftarrow H_1(k,v)$
				\If{$v'=\perp$}
				\State $v'\leftarrow\{0,1\}^x$
				\If{$\exists n,c$ s.t $v=ST_c\in\textbf{N}[n]$}
				\State \fbox{$bad\leftarrow true, v'\leftarrow UT[n,c]$}
				\EndIf
				\State $H_1(k,v)\leftarrow v'$
				\EndIf
				\State \Return $v'$
			\end{algorithmic}
			
		\end{multicols}
	\end{algorithm}
	
	\textbf{Game} $G_{1,2}$: In $G_{1,2}$, we pick random strings in replace of calling hash function $H_1$, where $H_1$ is modeled as a random oracle. For every search, the output of $H_1$ is programmed where $H_1(K_n, ST_c(n))=\texttt{UT}[n,c]$.
	
	Algorithm \ref{alg:aG1} describes this game and introduce an intermediate game $G_{1,2}'$. $G_{1,2}'$ is used to keep the consistency of $H_1$'s transcript. In \textbf{Update}, we chooses a random value for $UT_c(n)$ and stores it in table $\texttt{UT}$, and programed it to the output of $(K_n, ST_{c})$ in \textbf{Search}.
	
	Since $H_1$'s outputs in $G_{1,2}'$ and $G_{1,1}$ are perfectly indistinguishable, so we have
	$$\Pr[G_{1,1}=1]=\Pr[G_{1,2}'=1].$$
	
	 $G_{1,2}'$ and $G_{1,2}$ are also perfectly identical unless the $bad$ happens (set to $true$).
	 $\Pr[G_{1,2}'=1]-\Pr[G_{1,2}=1]\le \Pr[bad$ is set to true in $G_{1,2}']$
	 
	 Following \cite{Bos16}, the possibility for $H_1(K_n,ST_{c+1})$ already exists is the advantage of breaking the one-wayness of the trapdoor permutation which is $\verb"Adv"_{\Pi,\mathcal{B}_2}^{\verb"OW"}(1^{\lambda})$. Assume the query make $N$ queries, then we have 
	 
	 $$\Pr[G_{1,1}=1]-\Pr[G_{1,2}=1]=\Pr[G_{1,2}'=1]-\Pr[G_{1,2}=1]\le N \cdot \verb"Adv"_{\Pi,\mathcal{B}_2}^{\verb"OW"}(1^{\lambda}).$$
	
	%It is hard to distinguish these games due to the one-wayness of the trapdoor permutation. Then, we can conclude that there exists an efficiently $\mathcal{B}_2$ such that $$\Pr[G_{1,1}=1]-\Pr[G_{1,3}=1]\le 2N \cdot \verb"Adv"_{\Pi,\mathcal{B}_2}^{\verb"OW"}(1^{\lambda}) + 2Nq/2^{\lambda}$$ where $N$ is the number of times that queried $H_1$ and $H_2$.
	
	\textbf{Game} $G_{1,3}$: Similar to $G_{1,2}$, $G_{1,3}$ programs $H_2$. The same steps can be reused, giving that there is an adversary $B_3$, such that
	$$\Pr[G_{1,2}=1]-\Pr[G_{1,3}=1]\le N \cdot \verb"Adv"_{\Pi,\mathcal{B}_3}^{\verb"OW"}(1^{\lambda}).$$
	Note that, we can consider $B_2=B_3$ without loss of generality.

	\textbf{Game} $G_{1,4}$: In $G_{1,4}$, we keep the records of the random generated encrypted strings of the $H_1$ and $H_2$. In \textbf{Update}, we choose random values for update tokens and ciphertexts in Table \texttt{UT} and \texttt{e}, respectively. Then we program them identically to the outputs of the corresponding hash functions in \textbf{Search}. Then $G_{1,4}$ is exactly same as the $G_{1,3}$. More formally, $$\Pr[G_{1,4}=1]=\Pr[G_{1,3}=1]$$
	
	\begin{algorithm}[!htb]
		\caption{\textbf{Simulator} $\mathcal{S}_1$}\label{alg:aS}
		\begin{multicols}{2}
			\underline{$\mathcal{S}.$\textbf{Setup}($1^{\lambda})$}
			\begin{algorithmic}[1]
				\State $(\texttt{TSK}, \texttt{TPK})\leftarrow \texttt{TKeyGen}(1^{\lambda})$
				\State \textbf{N}, \textbf{T} $\leftarrow$ empty map
				\State $t=0$
				\State \Return $(\texttt{TPK}, \texttt{TSK}, \textbf{T}, \textbf{N})$
			\end{algorithmic}
			
			\underline{$\mathcal{S}.$\textbf{Update}()}\\
			\textit{Client:}
			\begin{algorithmic}[1]
				\State $\texttt{UT}[t]\leftarrow \{0,1\}^x$
				\State $\texttt{e}[t]\leftarrow \{0,1\}^y$
				\State Send $(\texttt{UT}[t],\texttt{e}[t])$ to the server.
				\State $t\leftarrow t+1$
			\end{algorithmic}
			
			\underline{$\mathcal{S}.$\textbf{Search}($\texttt{sp}(n), \texttt{Hist}(n), \texttt{cp}(n)$)}\\
			\textit{Client:}
			\begin{algorithmic}[1]
				\State $\hat{n}\leftarrow$ min $\texttt{sp}(n)$
				\State $K_{\hat{n}}\leftarrow \texttt{Key}[\hat{n}]$
				\State Parse $\texttt{cp}(n)$ as $t'$
				\If{$t'\ne\perp$} 
				\State Get the $c$-th search token $ST_c$ of previously queried keyword at time $t'$.
				\Else   
				\State Parse $\texttt{Hist}(n)$ as $((t_0,add,ind_0),$ $\cdots,(t_c,add,ind_c))$ 
				\If{$\texttt{Hist}(n)=\perp$}  
				\State \Return $\varnothing$
				\EndIf
				\For{$i=0$ to $c$}
				\State Set $H_1(K_{\hat{n}},ST_i)\leftarrow \texttt{UT}[t_i]$
				\State Set $H_2(K_{\hat{n}},ST_i)\leftarrow \texttt{e}[t_i]\oplus ind_i$
				\State $ST_{i+1}\leftarrow \Pi_{\texttt{TSK}}^{-1}(ST_i)$
				\EndFor
				\EndIf
				\State Send $(K_{\hat{n}}, ST_c)$ to the server.
			\end{algorithmic}
		\end{multicols}
	\end{algorithm}
	
	\textbf{Simulator} $\mathcal{S}_1$ With the contain pattern $\texttt{cp}$, the simulator can reuse the certain update token $UT$ to simulate the inclusion relationship between the keywords. We can use the search pattern $\hat{n}\leftarrow$ min $\texttt{sp}(n)$ and history $\texttt{Hist}$ to simulate the \textbf{Search} and \textbf{Update}. Similar to \cite{Bos16}, we have
	$$\Pr[G_{1,4}=1]=\Pr[\verb"DSSEIDEAL"_{\mathcal{A},\mathcal{S}_1}^{\Gamma_A}(1^{\lambda})=1]$$
	Finally, $$\Pr[\verb"DSSEREAL"_{\mathcal{A}}^{\Gamma_A}(1^{\lambda})=1]-\Pr[\verb"DSSEIDEAL"_{\mathcal{A},\mathcal{S}_1}^{\Gamma_A}(1^{\lambda})=1]$$$$\le \verb"Adv"_{F,\mathcal{B}_1}^{\texttt{prf}}(1^{\lambda})+2N \cdot \verb"Adv"_{\Pi,\mathcal{B}_2}^{\texttt{OW}}(1^{\lambda})$$ which completes the proof.\hfill{$\Box$}
\end{proof}

\noindent\textbf{Theorem \ref{th:bB}.} \emph{Let $F$ be a PRF, and $\Sigma$ be a IND-CPA secure Paillier cryptosystem. $\mathcal{L}_{\Gamma_B}=(\mathcal{L}_{\Gamma_B}^{Srch}$, $\mathcal{L}_{\Gamma_B}^{Updt})$, where $\mathcal{L}_{\Gamma_B}^{Srch}(n)=(\texttt{sp}(n))$, $\mathcal{L}_{\Gamma_B}^{Updt}(op,n,ind)=(n)$. Construction B is $\mathcal{L}_{\Gamma_B}$-adaptively backward-private.}

\begin{proof}
	For Theorem \ref{th:bB}, we also set a serial of games from $\verb"DSSEREAL"_{\mathcal{A}}^{\Gamma_B}(1^{\lambda})$ to $\verb"DSSEIDEAL"_{\mathcal{A},\mathcal{S}_2}^{\Gamma_B}(1^{\lambda})$.
	
	\textbf{Game} $G_{2,0}$: $G_{2,0}$ is exactly same as the real world game $\verb"DSSEREAL"_{\mathcal{A}}^{\Gamma_B}(1^{\lambda})$.$$\Pr[\verb"DSSEREAL"_{\mathcal{A}}^{\Gamma_B}(1^{\lambda})=1]=\Pr[G_{2,0}=1]$$
	
	\textbf{Game} $G_{2,1}$: Instead of calling $F$ when generating $UT_n$, $G_{2,1}$ picks a new random key when it inputs a new keyword $n$, and stores it in a table $\texttt{Key}$ so it can be reused next time. If an adversary $\mathcal{A}$ is able to distinguish between $G_{2,0}$ and $G_{2,1}$, we can then build a reduction able to distinguish between $F$ and a truly random function. More formally, there exists an efficient adversary $\mathcal{B}_1$ such that $$\Pr[G_{2,0}=1]-\Pr[G_{2,1}=1]\le \verb"Adv"_{F,\mathcal{B}_1}^{\texttt{prf}}(1^{\lambda}).$$
	
	\textbf{Game} $G_{2,2}$: We replace the bit string $bs$ with a all $0$ bit string. If an adversary $\mathcal{A}$ is able to distinguish between $G_{2,1}$ and $G_{2,2}$, we can then build an adversary $\mathcal{B}_2$ to break the semantic security of Paillier cryptosystem. More formally, there exists an efficient adversary $\mathcal{B}_2$ such that $$\Pr[G_{2,1}=1]-\Pr[G_{2,2}=1]\le \verb"Adv"_{\Sigma,\mathcal{B}_2}^{\texttt{IND-CPA}}(1^{\lambda}).$$
	
	\begin{algorithm}[!htb]
		\caption{\textbf{Simulator} $\mathcal{S}_2$}\label{alg:bS}
		\begin{multicols}{2}
			\underline{$\mathcal{S}.$\textbf{Setup}($1^{\lambda})$}
			\begin{algorithmic}[1]
				\State $(\texttt{SK}, \texttt{PK})\leftarrow \texttt{KeyGen}(1^{\lambda})$
				\State \Return $(\texttt{PK}, \texttt{SK},t)$
			\end{algorithmic}
			
			\underline{$\mathcal{S}.$\textbf{Update}($n$)}\\
			\textit{Client:}
			\begin{algorithmic}[1]
				\State $UT_{n}\leftarrow \texttt{Key}(n)$
				\State $e\leftarrow \texttt{Enc}(\texttt{PK}, 0\cdots0)$
				\State Send $(UT_{n},e)$ to the server.
			\end{algorithmic}
			
			\underline{$\mathcal{S}.$\textbf{Search}($\texttt{sp}(n)$)}\\
			\textit{Client:}
			\begin{algorithmic}[1]
				\State $\hat{n}\leftarrow$ min $\texttt{sp}(n)$
				\State $UT_{\hat{n}}\leftarrow Key(\hat{n})$
				%\State Parse $\texttt{HistUp}(n)$ as $(t_0,\cdots,t_c)$
				\State Send $UT_{\hat{n}}$ to the server.
			\end{algorithmic}
			
		\end{multicols}
	\end{algorithm}
	
	\textbf{Simulator} Now, we can simulator the \verb"DSSEIDEAL" with the leakage functions defined in this Theorem. We removed the useless part which will not influence the client's transcript. See Algorithm \ref{alg:bS} for more details. This two games is indistinguishable. So we have
	 $$\Pr[G_{2,2}=1]=\Pr[\verb"DSSEIDEAL"_{\mathcal{A},\mathcal{S}_2}^{\Gamma_B}(1^{\lambda})=1]$$
	
	Finally, $$\Pr[\verb"DSSEREAL"_{\mathcal{A}}^{\Gamma_B}(1^{\lambda})=1]-\Pr[\verb"DSSEIDEAL"_{\mathcal{A},\mathcal{S}_2}^{\Gamma_B}(1^{\lambda})=1]$$$$\le \verb"Adv"_{F,\mathcal{B}_1}^{\texttt{prf}}(1^{\lambda})+\verb"Adv"_{\Sigma,\mathcal{B}_2}^{\verb"IND-CPA"}(1^{\lambda})$$ which completes the proof.\hfill{$\Box$}
	
\end{proof}

\end{document}